\newcommand{\qed}{\hfill$\Box$}
\newenvironment{proof}{\noindent {\bf Proof.}}{\qed}
\newtheorem{theorem}{Theorem}[section]
\newtheorem{lemma}{Lemma}[section]
\begin{document}

\baselineskip 0.2in
\parskip      0.1in
\parindent    0em

\bibliographystyle{plain}

\title{{\bf Deterministic Collision-Free Exploration\\ of Unknown Anonymous Graphs}}

\author{
Subhash Bhagat \footnotemark[1]
\and
Andrzej Pelc  \footnotemark[2] \footnotemark[3]
}

\date{ }
\maketitle
\def\thefootnote{\fnsymbol{footnote}}

\footnotetext[1]{
\noindent
Department of Mathematics, Indian Institute of Technology Jodhpur, India. {\tt sbhagat@iitj.ac.in}}

\footnotetext[2]{
	\noindent
	D\'{e}partement d'informatique, Universit\'{e} du Qu\'{e}bec en Outaouais,
	Gatineau, Qu\'{e}bec,
	Canada. {\tt pelc@uqo.ca}}

\footnotetext[3]{
Research supported in part by NSERC  Discovery Grant 2018-03899  and by the
Research Chair in Distributed Computing of the
Universit\'{e} du Qu\'{e}bec en Outaouais.
}

\begin{abstract} 
	We consider the fundamental task of network exploration. A network is modeled as a simple connected undirected $n$-node graph with unlabeled nodes,
	and all ports at any node of degree $d$ are arbitrarily numbered $0,\dots, d-1$. Each of two identical mobile agents, initially
	situated at distinct nodes, has to visit all nodes and stop. Agents execute the same deterministic algorithm and move in synchronous rounds: in each round an agent can either remain at the same node or move to an adjacent node. Exploration must be collision-free: in every round at most one agent can be at any node. We assume that agents have vision of radius 2: an awake agent situated at a node $v$ can see the subgraph induced by all nodes at distance at most 2 from $v$,  sees all port numbers in this subgraph and the agents located at these nodes. Agents do not know the entire graph but they know an upper bound $n$ on its size. The {\em time} of an exploration is the number of rounds since the wakeup of the later agent to the termination by both agents. We show a collision-free exploration algorithm working in time polynomial in $n$, for arbitrary graphs of size larger than 2. Moreover we show that if agents have only vision of radius 1, then collision-free exploration is impossible, e.g., in any tree of diameter 2.

\vspace*{1cm}

{\bf Keywords:} algorithm, graph, exploration, mobile agent, collision

\end{abstract}
 \section{Introduction}
 
 Exploration of networks by visiting all of their nodes is one of the basic tasks performed by mobile agents in networks.
 Agents may model humans walking along streets of a town with the aim of visiting it, software agents
 collecting data placed at nodes of a computer network, or mobile robots
 collecting samples of air or ground in a contaminated mine whose corridors form links of a network. 
 We consider exploration by two agents, where each node must be visited by each of the agents, and both agents have to eventually stop.
 We impose the additional requirement that exploration must be {\em collision-free}:  in every round at most one agent can be at any node. This requirement, previously considered in \cite{CDGKKP,NMH}, has many applications. If agents are humans circulating in a pharmacy or a supermarket during a pandemic, some areas (such as cash registers or prescription counters) may be too small to preserve the 6-feet requirement of social distancing between customers. When mobile entities model software agents, these entities may need exclusive access to the part of a distributed data base situated at a node. Finally, mobile robots distributing chemicals at nodes of a network may need to avoid being at a small distance from each other.

 The network is modeled as a simple connected
 undirected graph $G=(V,E)$, called {\em graph} in the sequel.
 Nodes are unlabeled, and all ports at any node of degree $d$ are arbitrarily numbered $0,\dots, d-1$.
 This is a standard assumption in many papers on network exploration \cite{AKLLR,CDK,GP,Re}.
 Each of two identical mobile agents, initially
 situated at distinct nodes, has to visit all nodes and stop. Agents execute the same deterministic algorithm and move in synchronous rounds: in each round, an agent can either remain at the same node or move to an adjacent node by taking one of the ports.
 We say that an agent is located at node $v$ in round $t$, if it is at this node after the above action in round $t$.
 When agents cross each other moving simultaneously along an edge in opposite directions, they do not even notice this fact. 
 Agents are woken up in possibly different rounds, chosen by an adversary. 
 Each agent starts the execution of the algorithm in the round following its wakeup.
 In every round, at most one agent can be located at any node. 
 For a non-negative integer $r$, we say that agents have {\em vision of radius} $r$, if in each round $t$,  an awake agent located at a node $v$ in round $t-1$ sees the subgraph $G(v,r)$ induced by all nodes at distance at most $r$ from $v$, sees all port numbers in this subgraph, and sees the other agent located at any of these nodes in round $t-1$. We call this information the {\em input} of the agent in round $t$. 
 We assume that agents have vision of radius 2. In each round $t$, an awake agent located at a node $v$ in round $t-1$, uses the input in round $t$ to determine its action in round $t$: either moving to an adjacent node choosing one of the ports at $v$, or staying idle at $v$. 
 
 An awake agent can compare its inputs in consecutive rounds and deduce a crucial piece of information (often used in our algorithm), namely that the other agent has moved. We will say that agent $a$ already awake in round $t-1$ {\em sees the agent $b$ moving in round $t$}, if a node $u$ is in the input of $a$ in rounds $t-1$ and $t$, and one of the following conditions is satisfied: either $b$ was not at $u$ in round $t-1$ but was at $u$ in round $t$, or $b$ was at $u$ in round $t-1$ but was not at $u$ in round $t$.

 Agents have no {\em a priori} knowledge of the graph but they know an upper bound $n$ on its size. 
 Agents are computationally unbounded and cannot mark the visited nodes. They cannot send any messages. 
 The {\em time} of an exploration is the number of rounds since the wakeup of the later agent to the termination by both agents.

 \subsection{Our results}
 
 Our main result is a collision-free exploration algorithm working in time polynomial in $n$, for arbitrary graphs of size larger than 2 (in the two-node graph, collision-free exploration is impossible, see Section 3). We also show that  the assumption of vision of radius 2 cannot be weakened: we prove that collision-free exploration with vision of radius 1  is impossible in any tree of diameter 2.
 
 An important building block  of our algorithm is a procedure called $EXP(n)$, that, given any positive integer $n$, allows an agent to visit all nodes of any graph of size at most $n$,
 starting from any node of this graph, using $R(n)$ edge traversals, where $R$ is some polynomial. We show that our collision-free exploration algorithm works in time $O(R(n)\log n)$, and thus has only logarithmic slowdown with respect to the time of exploration by a single agent, without any constraints. It is important to note that, while we use the well-known Reingold's \cite{Re} result as the base of $EXP(n)$, any exploration procedure working in arbitrary anonymous graphs of size at most $n$ would do, and no changes would be necessary. (We could use, for example, the faster but non-constructive exploration algorithm from \cite{AKLLR}). Hence, our result can be viewed as a generic one: given any exploration procedure for one agent, working in arbitrary anonymous graphs of size at most $n$, we obtain a collision-free exploration with only logarithmic slowdown. Of course, if the collision-free restriction were removed, we could use any of the existing exploration procedures for anonymous graphs, and each of the agents would accomplish exploration independently.
 
 Collision-free exploration in anonymous graphs, studied in this paper, should be compared to the scenario considered in \cite{CDGKKP}, where collision-free exploration was studied in {labeled} graphs. As opposed to our assumption that nodes of the graph are unlabeled,  the authors of \cite{CDGKKP} assumed that all nodes of the graph have distinct identifiers and that each agent can see the labels of its current node and of all adjacent nodes. Under this assumption, they designed a cost-efficient collision-free exploration algorithm in arbitrary graphs by teams of agents of arbitrary size. It is important to stress the major difference between the labeled and the anonymous scenario, in the context of collision-free exploration. While the authors of \cite{CDGKKP} where able to precompute collision-free routes of all agents, using distinct labels of nodes, no such technique is possible in our anonymous case. In fact, the main difficulty that we face in the anonymous scenario is breaking symmetry between agents when they are near each other (at distance at most 2), and when there is the danger of colliding at a common neighbor. The possibility of precomputing collision-free routes in the labeled scenario avoids this major difficulty altogether.

 \subsection{Related work}
 
 The problem of
 exploration of an unknown environment by mobile agents
 has been extensively studied in the literature for many decades (cf. the survey \cite{RKSI}).
 Two types of environments have been considered:
 either geometric terrains in the plane, e.g., an
 unknown terrain with convex obstacles \cite{BRS},
 or a room with polygonal \cite{DKP} or rectangular \cite{BBFY} obstacles, or
 as we do in this paper, networks modeled
 as graphs, assuming that the agents may only move along the edges. The graph
 model can be further specified in two different ways:
 either the graph is directed, in which case the agent can move only from
 tail to head of a directed edge  \cite{AH,BFRSV,BS}, or the graph is undirected (as we assume)
 and the agent can traverse edges in both directions  \cite{ABRS,BRS2,DKK,PaPe}.
 The efficiency measure adopted in most papers dealing with exploration of graphs is the cost
 of completing this task, measured by the number of edge traversals by the agent, or by all agents.
 Some authors impose further restrictions on the moves of the agent.
 It is assumed that the agent has either a restricted tank \cite{ABRS,BRS2},
 and thus has to periodically return to the base for refueling, or that it is attached to the
 base by a rope or a cable of restricted length \cite{DKK}.
 
 An important direction of research concerns
 exploration of anonymous graphs.
 In this case it is impossible
 to explore arbitrary graphs and stop after exploration, if no marking of nodes is allowed, and if nothing is known about the graph.
 Hence some authors \cite{BFRSV,BS}
 allow {\em pebbles} which can be dropped on nodes to recognize already visited ones, and
 then removed and dropped on other nodes. A more restrictive scenario assumes
 a stationary token that is fixed at the starting node of the agent \cite{CDK,PeTi}.
 Exploring
 anonymous graphs without the possibility of marking nodes (and thus possibly without stopping)
 is investigated, e.g., in \cite{DFKP,FI}.
 The authors
 concentrate attention not on the cost of exploration but on the minimum amount of memory sufficient
 to carry out this task. In the absence of marking nodes,  in order to guarantee stopping after exploration, some knowledge about the graph is required,
 e.g., an upper bound on its size \cite{CDK,Re}.
 Graph exploration using the paradigm of {\em algorithms with advice} was investigated in \cite{DGMPS,DKM,FIP2,GP}. In this research, the main issue is to find the minimum amount of information that the agent must have in order to perform exploration in an efficient way.
 
 \section{Preliminaries}

 In our algorithm we will use the following procedure, called $EXP(n)$, based on universal exploration sequences (UXS) \cite{Ko}, and derived from the  result of Reingold \cite{Re}. Given any positive integer $n$, it allows the agent to visit all nodes of any graph of size at most $n$,
 starting from any node of this graph, using $R(n)$ edge traversals, where $R$ is some polynomial. The sequence of steps of an agent starting at any node of a graph of size at most $n$ and applying procedure $EXP(n)$ will be called an {\em exploration walk}.
 
 A UXS is an infinite sequence $x_1,x_2,\dots$ of non-negative integers. Given this sequence, whose effective construction follows from  \cite{Ko,Re}, the procedure $EXP(n)$ can be described as follows.
 In step 1, the agent leaves the starting node by port 0. For $i \geq 1$, the agent that entered the current node
 of degree $d$ by some port $p$ in step $i$,
 computes the port $q$ by which it has to exit in step $i+1$ as follows: $q=(p+x_i)\mod d$.
 The result of Reingold implies that if an agent starts at any node $v$ of an arbitrary graph with at most $n$ nodes,  and applies procedure $EXP(n)$, then it will visit all nodes of the graph after $R(n)$ steps.
 
 \section{The algorithm}
 
 We start with the observation that  collision-free exploration is impossible in the two-node graph.  Indeed, suppose that there is a collision-free algorithm for this graph. This algorithm would have to instruct an agent to move $x$ rounds after waking up, for some $x\geq 0$. The adversary can wake up one agent $x+1$ rounds before the other, thus making it move when the other agent is still at the other node, creating a collision. It follows that we can assume from now on that the graph has more than two nodes.
 
 The high-level idea of the algorithm is the following. The agent executes the exploration walk, using procedure $EXP(n)$, until it sees the other agent at some point or until it finishes the walk.  Due to the lack of node labels, such exploration walks cannot be precomputed, and getting at distance at most 2 from the other agent, called an {\em approach}, cannot be avoided in general. An approach is potentially dangerous: if agents blindly followed their exploration walks, they  could get to the same common neighbor in the next round, creating a collision. Even if there is no common neighbor and the agents are at distance 1 in an approach, following the exploration walk is not safe either. The other agent could be not yet woken up, so if the exploration walk prescribes moving to its current location, this could again result in a collision.  The crux of the algorithm is the behavior of the agents during an approach.
 
 At an approach, both agents execute the procedure {\tt Approach resolution} whose aim is to make {\em progress}. Progress is defined as at least one agent making the next step in its exploration walk. The procedure is designed in such a way that at each approach either both agents make the next step in their respective exploration walks, or if at the previous approach only agent $A$ made the next step, then at the current approach agent $B$ will make the next step and agent $A$ will end up at the same node at which it was when this approach occurred. This guarantees that eventually both agents will execute their entire exploration walks without collisions.
 
 We now proceed with the description of procedure {\tt Approach resolution}. The procedure consists of two subroutines: {\tt Synchronization} and {\tt Progress}. The aim of {\tt Synchronization} is to  bring both agents to the same position they occupied at the beginning of this subroutine, and to make sure that they are both woken up and agree on the same round called {\em red}. {\tt Synchronization}  is executed only at the first approach, since at all subsequent approaches  both agents remember that the other agent is awake and, since they see each other in the same round, they can consider this round to be red. Subroutine  {\tt Progress} begins in the round following the red round of an approach and its aim is to make progress, as described above. Upon completion of {\tt Progress}, both agents may stop seeing each other, in which case they continue executing their respective exploration walks until the next approach, or they can still see each other in which case this is considered a new approach, and they proceed to the next execution of {\tt Progress}, as {\tt Synchronization} is not needed anymore.
 

 \textbf{ Subroutine {\tt Synchronization}}\\
 We now give a detailed description of subroutine {\tt Synchronization}. It starts in the round when at least one of the agents sees the other for the first time. Let $X$ be the set consisting of the two agents. Consider an agent $a\in X$. We denote by $\hat a$ the other agent in $X$. Let agents $a$ and $\hat a$  be located at nodes $u$ and $v$ respectively at the beginning of the first {\it approach}. We refer to $u$ and $v$ as the {\it initial synch nodes} of $a$ and $\hat a$ respectively. Let $B$ be the set of common neighbors of $u$ and $v$. Let $D=V\backslash (B \cup \{u,v\})$. Let $A_u$ and $A_v$  be the subsets  of nodes in $D$ which are adjacent to $u$ and to $v$ respectively. We call these nodes the {\em private} neighbors of $u$ and of $v$ respectively. The back and forth movement of an agent between two fixed nodes is referred to as the {\it oscillating movement}. {\color{black} During the execution of our algorithm, an agent performs three types of {\it waits}, namely, {\it the initial wait, the compulsory wait,} and {\it the terminal wait}. The {\it  initial wait} and the {\it compulsory wait} are performed by an agent before it makes its first move after waking up. Whenever an agent wakes up, it waits for one round. This waiting is referred to as the {\it initial wait} of the agent.  The {\it initial wait} is essential to avoid collisions. The {\it compulsory wait} depends on the situation. There are two reasons for the {\it compulsory wait}: either the agent waits to see the other agent moving for the first time or it waits for the other agent to move to a particular node. This type of wait is also essential to avoid collisions. Finally, the {\it terminal wait} is defined as follows: an agent $a\in X$  waits in round $t+1$ if in round $t$ agent $a$ was moving and finds the other agent $\hat a$ moving for the first time, and in round $t$, agent $a$ is at  its initial synch node. This wait is referred to as the {\it terminal wait} of agent $a$. The {\it terminal wait} is for the confirmation that both agents are at their respective initial synch nodes at the start of the {\it red} round. During the synchronization, the round when for the first time both agents move simultaneously is referred as the {\it event round}. }
 
 
 {\color{black} Synchronization starts in the round when at least one of the agents sees the other agent for the first time. Call this round $\hat t$. Then there are the following two  possibilities,
 	\begin{itemize}
 		\item {\bf Case A:} Both agents see each other moving in round $\hat t$ (figure \ref{fig-01}).
 		\item {\bf Case B:} At least one of the agents does not see the other agent moving in round $\hat t$.
 		
 	\end{itemize} 
 	Now we describe the strategies followed by the agents in both of the above cases separately.
 	
 	{\bf Case A:} In this case, round $\hat t$ is the {\it event round}. Both agents do not move in round $\hat t+1$ and consider round $\hat t+2$ as the red round. We make agents wait in round $\hat t+1$ just for convenience to prove the correctness of our algorithm.

 	\begin{figure}[h]
 		\vspace*{-.294in}
 		\centering
 		\hspace*{-.75in} \includegraphics[scale = .60]{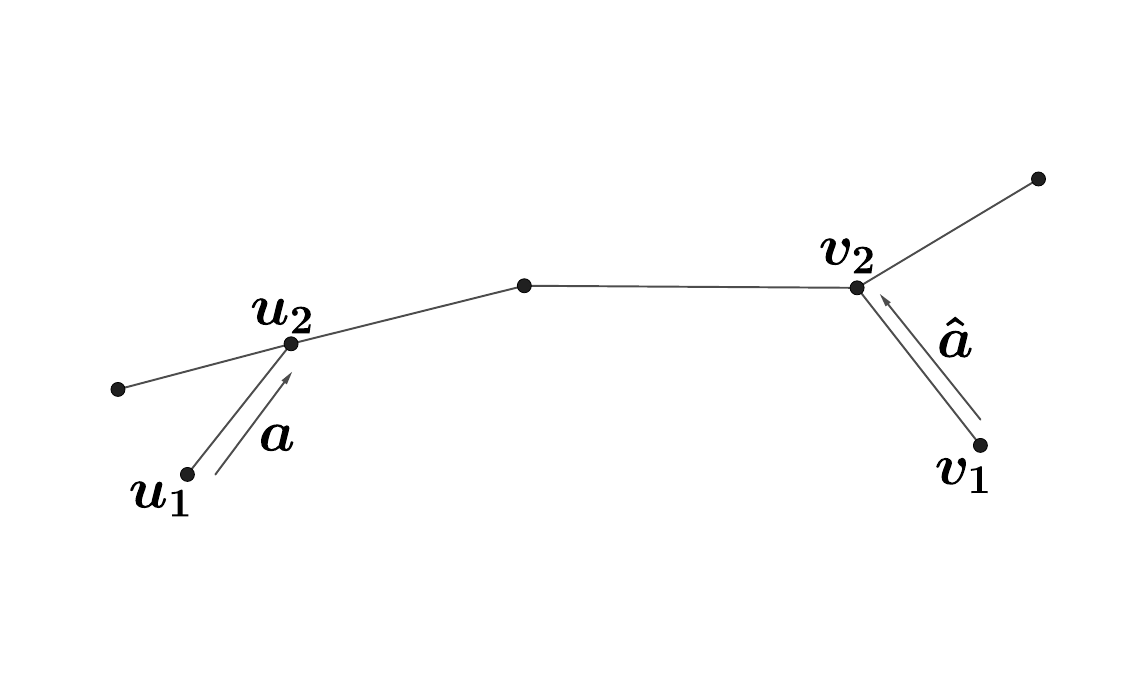}
 		\vspace*{-.5in}
 		\caption{An illustration for the Case A: In round $\hat t-1$, agents $a$ and $\hat a$ were at nodes $u_1$ and $v_1$ respectively. During the round $\hat t$, both agents $a$ and $\hat a$ move concurrently to the nodes $u_2$ and $v_2$ respectively.}
 		\label{fig-01}
 	\end{figure} 	
 	
 	{\bf Case B:}  In this case at least one of the agents wakes up in some round $t\ge \hat t$. Suppose that agent $a\in X$ wakes up in round $t$. It makes its initial wait in round $t+1$. Now there are two possibilities: in round $t+1$, the other agent $\hat a$ either moves or does not move. The actions of agent $a$ in round $t+2$ and onward depend on this and are as follows:}
 
 \begin{itemize}
 	\item {\bf Case B1.  Agent $\hat a$ does not move in round $t+1$:} In this case, agent $a$ infers that the  initial synch node of agent $\hat a$ is the node $v$ at which $\hat a$ stays in round $t+1$. The actions of agent $a$ depend on the sub-graph induced by nodes adjacent to $u$ and to $v$. 
 	\begin{itemize}
 		\item {\bf Subcase 1. Nodes $u$ and $v$ do not have any common neighbor:}
 		In this case, $u$ and $v$ are adjacent.
 		\begin{itemize}
 			\item {\bf Subcase 1.1. $u$  is not a leaf node:} Agent $a$ selects the private neighbor $n_u\in A_u$ which corresponds to the lowest port number at node $u$ and starts the {\it oscillating movement} between its  initial synch node $u$ and the selected private neighbor $n_u$ (figure \ref{fig-02}(A)). It keeps oscillating until it sees the other agent moving for the first time. When $a$ sees $\hat a$ moving for the first time, it stops its oscillating movement. Let $t^*$ be the round when agent $a$ sees agent $\hat a$ moving for the first time (the {\it event round}). If agent $a$ is at its initial synch node $u$ in round $t^*$, then it does not move in round $t^*+1$ (the {\it terminal wait} for agent $a$). Otherwise, agent $a$ moves back to its initial synch node $u$ in round $t^*+1$. Agent $a$ decides on the round $t^*+2$ as the {\it red round}. The  terminal wait of agent $a$ helps the agent $\hat a$ to confirm the initial synch node of $a$. This also ensures that agent $\hat a$ gets back to its initial synch node before the start of the {\it red round}. 
 			
 			\begin{figure}[h]
 				\vspace*{-.1294in}
 				\centering
 				\hspace*{-.75in} \includegraphics[scale = .45]{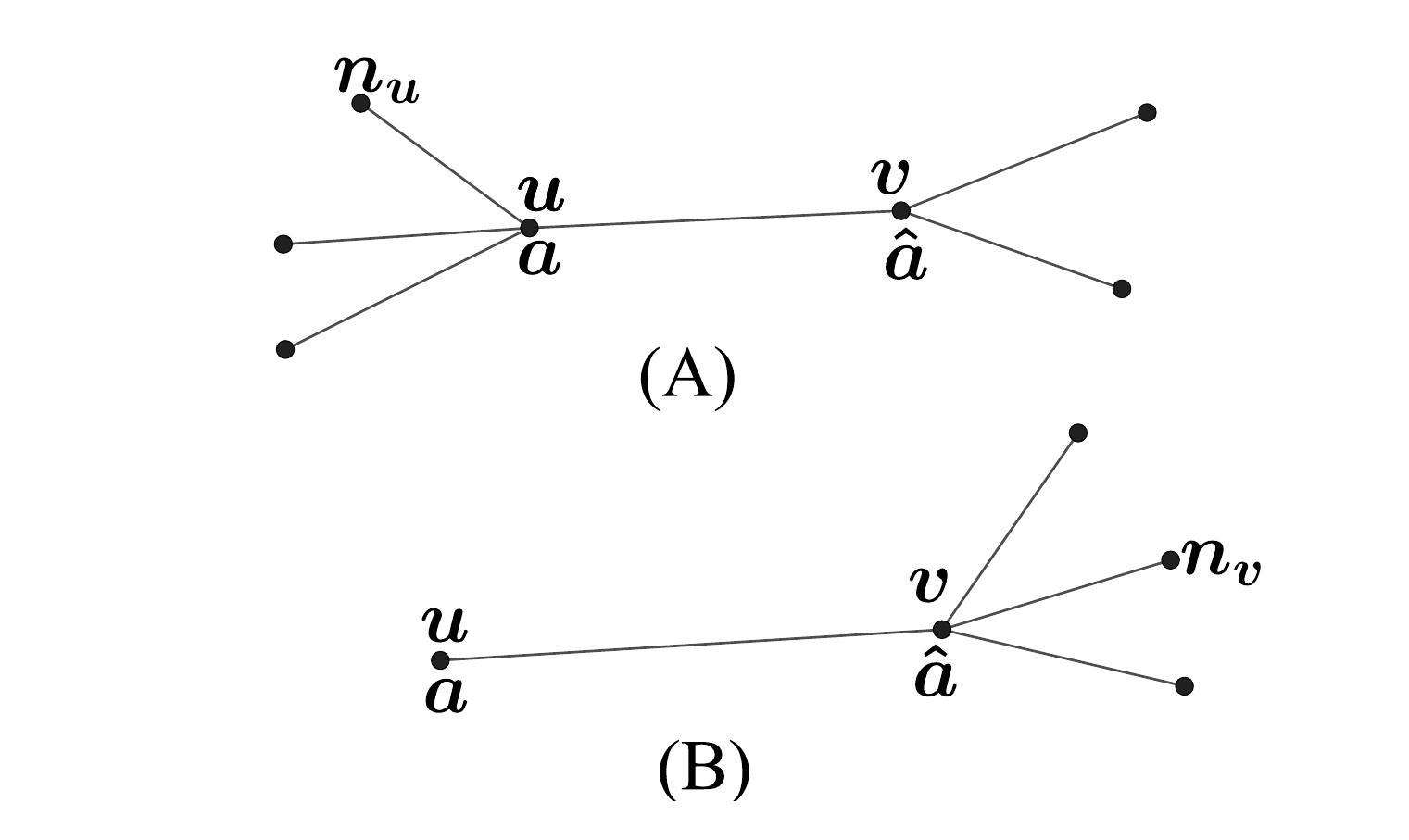}
 				\vspace*{-.1in}
 				\caption{An illustration for: (A) Subcase 1.1 when node $u$ is not a leaf node (B) Subcase 1.2 when node $u$ is a leaf node}
 				\label{fig-02}
 			\end{figure} 	
 			
 			\item {\bf Subcase 1.2. $u$ is a leaf node:} The node $u$ does not have any private neighbor in this case but $v$ does (as there are more than two nodes in the graph). The only node where agent $a$ can move from node $u$ is $v$ (figure \ref{fig-02}(B)). Upon wake-up, agent $a$  waits until it sees the other agent moving (a {\it compulsory wait}). Let $t^*$ be the round when agent $a$ sees agent $\hat a$ moving for the first time. Agent $\hat a$ moves from its initial synch node $v$ to one of its private neighbors $n_v$. In round $t^*$, agent $\hat a$ is at node $n_v$. In round $t^*+1$, agent $a$ does not move (it is a {\it compulsory wait} to avoid collision with agent $\hat a$). In round $t^*+1$, agent $a$ finds agent $\hat a$ at node $v$. In round $t^*+2$ (the event round), agent $a$ moves to node $v$ (while $\hat a$ moves to $n_v$)  and in round $t^*+3$, $a$ moves back to its initial synch node $u$ (while $\hat a$ moves back to $v$). Agent $a$ decides on round $t^*+4$ as the {\it red round}. 
 			
 			
 			
 		\end{itemize}
 		
 		\item {\bf Subcase 2. Nodes $u$ and $v$ have at least one common neighbor:}
 		We define two subsets of nodes $B_u,B_v\subset B$ in the following way: a node $w\in B$ is in $B_u$ if the port number at $w$ corresponding to the edge $\{w,u\}$ is larger than the port number at $w$ corresponding to the edge $\{w,v\}$ (figure \ref{fig-03a}). The set $B_v$ is defined analogously.
 		\begin{itemize}
 			\item {\bf Subcase 2.1. $B_u$ is not empty:} In this case, agent $a$ moves in round $t+2$. Agent $a$  chooses the node $ n_u\in B_u$ corresponding to the lowest port number at $u$, and starts its  oscillating movement between  nodes $u$ and the selected neighbor $n_u$. It continues its movement until it sees the other agent moving. Suppose $t^*$ is the round when agent $a$ sees $\hat a$  moving for the first time. If in round $t^*$, agent $a$ is not at $u$ (its initial synch node), it moves back to $u$ in round $t^*+1$. Otherwise, it waits in round $t^*+1$ (the terminal wait). Agent $a$ decides on $t^*+2$ as the red round. 
 			
 			\begin{figure}[h]
 				\vspace*{-.194in}
 				\centering
 				\hspace*{-.75in} \includegraphics[scale = .45]{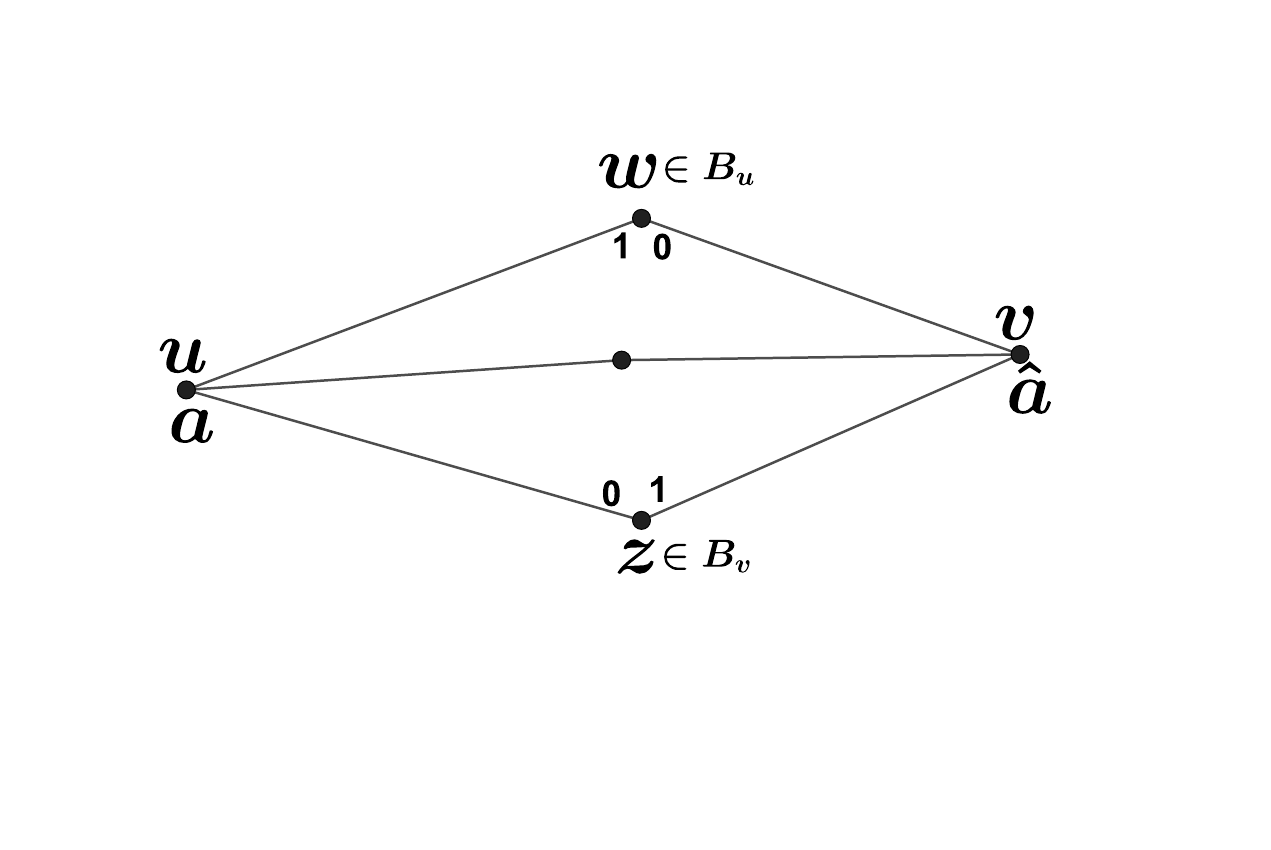}
 				\vspace*{-.6in}
 				\caption{An illustration for Subcase 2 when  nodes $u$ and $v$ have at least one common neighbor}
 				\label{fig-03a}
 			\end{figure}
 			\item {\bf Subcase 2.2. $B_u$ is empty:}  In this case, agent $a$ waits until it sees agent $\hat a$ moving.  Let $t^*$ be the round when agent $a$ sees agent $\hat a$ moving for the first time. In round $t^*$, agent $\hat a$ is at node $n_v$. In round $t^*+1$ (the event round), agent $a$ moves to $n_v$ (while $\hat a$ moves back to $v$). In round $t^*+2$, agent $a$ moves back to node $u$ (while $\hat a$ waits at $v$). Agent $a$ decides on round $t^*+3$ as the {\it red round}.

 		\end{itemize}
 	\end{itemize}
 	\begin{figure}[h]
 		\vspace*{-.23in}
 		\centering
 		\hspace*{-.15in} \includegraphics[scale = .45]{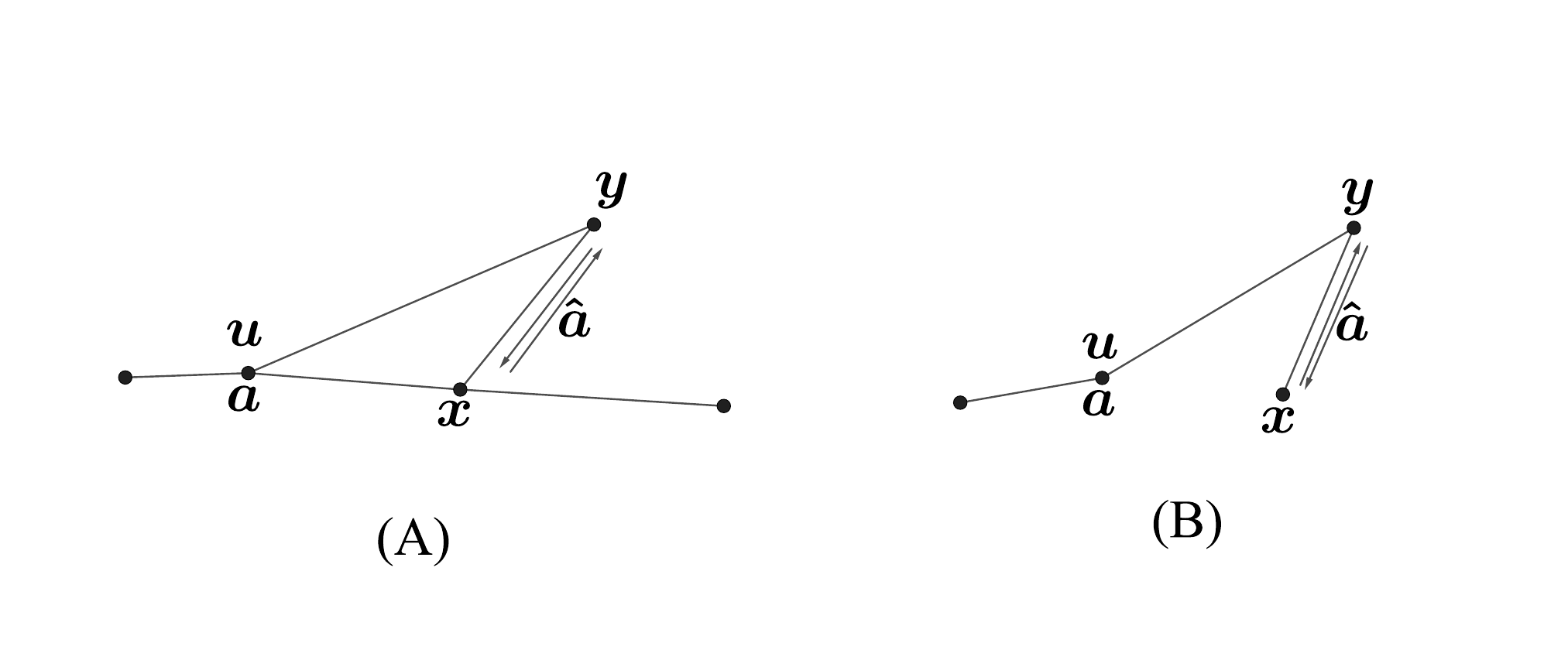}
 		\vspace*{-.6in}
 		\caption{An illustration for Case B2 when agent $\hat a$ moves in round $t+1$: (A) when both nodes $x$ and $y$ are adjacent to node $u$ (B) when exactly one of nodes $x$ and $y$ is adjacent to the node $u$.}
 		\label{fig-03}
 	\end{figure}
 	
 	\item {\bf Case B2. Agent $\hat a$ moves in round $t+1$:}
 	In this case, agent $a$ is not sure which is the initial synch node of agent $\hat a$. Suppose that agent $\hat a$ moves from node $x$ to $y$ in  round $t+1$. One of $x$ and $y$ is the initial synch node of  agent $\hat a$. At least one of $x$ and $y$ is adjacent to node $u$. 
 	First consider the case, when both nodes $x$ and $y$ are adjacent to $u$ (figure \ref{fig-03} (A)). In round $t+2$ (the event round), agent $a$ moves to node $y$ (while $\hat a$ moves back to $x$). In round $t+3$, agent $a$ moves back to its initial synch node $u$. Agent $a$ decides on round $t+4$ as the {\it red round}.
 	
 	Next consider the case, when exactly one of nodes $x$ and $y$ is adjacent to $u$ (figure \ref{fig-03} (B)). If $x$ is not adjacent to $u$ (but $y$ is adjacent to $u$), agent $a$ moves to node $y$ in round $t+2$ (the event round) and moves back to node $u$ in round $t+3$. In this case, agent $a$ decides on round $t+4$ as the {\it red round}. If $x$ is  adjacent to $u$ (but $y$ is not adjacent to $u$), agent $a$ does not move in round $t+2$ (a compulsory wait), while $\hat a$ moves back to $x$. In round $t+3$ (the event round), agent $a$ moves to node $x$ (while $\hat a$ moves to $y$)  and in round $t+4$, $a$ moves back to node $u$. Agent $a$ decides on round $t+5$, as the {\it red round}.
 \end{itemize}

 \begin{lemma}
 	Upon completion of subroutine {\tt Synchronization} both agents establish the same round as red, in this round they know that they are awake and they are at the positions they occupied at the start of the subroutine.
 \end{lemma}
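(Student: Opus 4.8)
The organizing idea is to treat the \emph{event round} as a common, symmetric anchor from which both agents deterministically compute the red round. My plan is to prove the single invariant that, in every branch of subroutine \texttt{Synchronization}, the red round equals the event round plus $2$. Since the event round---the first round in which both agents move simultaneously---is by definition an objective event observed identically by both agents, this invariant immediately yields that both agents fix the same round as red. The proof then proceeds by following the case structure of the subroutine (Case A, and the subcases of B1 and B2), checking the invariant in each, and the lemma is purely about synchronization, so collision-freedom is not at issue here.

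First I would establish that a well-defined event round $e$ occurs in every execution and that both agents recognize it. In each branch the protocol forces the two agents to move in the same round at least once: in Case A round $\hat t$ is already simultaneous; in Subcase 1.1 the oscillating agent moves in every round until it first sees $\hat a$ move, so the detecting round is itself simultaneous; and in the remaining subcases the step tagged ``(the event round)'' is exactly a round in which the descriptions of $a$ and $\hat a$ both prescribe a move. I would argue that because the two agents stay within distance $2$ throughout the subroutine (the private and common neighbours involved lie within the vision radius of the initial synch nodes $u$ and $v$), each agent sees the other's move, so both detect round $e$ and both know, from having seen a simultaneous move, that the other is awake. This settles the ``awake'' clause: at round $e+2$ both agents are awake and aware of it.

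Next I would verify the invariant red $=e+2$ branch by branch by matching each local reference round to $e$. Concretely: Case A gives red $=\hat t+2$ with $e=\hat t$; Subcases 1.1 and 2.1 give red $=t^*+2$ with $t^*=e$; Subcase 1.2 gives red $=t^*+4$ with $e=t^*+2$; Subcase 2.2 gives red $=t^*+3$ with $e=t^*+1$; and the three B2 patterns give red $=t+4,\,t+4,\,t+5$ with $e=t+2,\,t+2,\,t+3$ respectively. In every case the offset is exactly $2$. For the position clause I would then inspect rounds $e$ and $e+1$ in each branch and check that the agent returns to its own initial synch node: at round $e$ the agent is either already at its initial synch node, in which case the terminal wait in round $e+1$ keeps it there, or it is one step away along the prescribed edge, in which case the prescribed step in round $e+1$ brings it back; either way it occupies its initial synch node at the start of round $e+2$, which is the position it held at the start of the subroutine.

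The main obstacle is that the two agents need not be running the same branch: under an adversarial choice of wake-up times and an asymmetric local topology (for example when $u$ is a leaf but $v$ is not, or when one agent wakes first and is already oscillating when the other wakes and therefore sees it moving), agent $a$ may be in Case B1 while $\hat a$ is in Case B2, and their homing patterns differ. The delicate part of the proof is therefore to enumerate the \emph{compatible} pairs of branches that can actually co-occur for a given $(u,v)$ and a given wake-up gap, and to check for each such pair that the two agents share one and the same event round $e$ (so that red $=e+2$ is genuinely common to both) and that each returns to its own initial synch node by round $e+2$. The most intricate interactions are those in which one agent wakes in the middle of the other's oscillation: I would handle these by tracking the parity of the oscillation phase at the second agent's wake-up and showing that the compulsory waits inserted by the protocol realign the two agents so that their first simultaneous move is indeed the intended event round and both land home at $e+2$.
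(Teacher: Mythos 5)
Your plan is organized around exactly the same invariant as the paper's proof: in every scenario the red round equals the event round plus $2$, where the event round is the first round in which both agents move simultaneously. Your branch-by-branch offsets (Case A: $e=\hat t$; Subcases 1.1, 2.1: $e=t^*$; Subcase 1.2: $e=t^*+2$; Subcase 2.2: $e=t^*+1$; the three B2 patterns: $e=t+2,t+2,t+3$) all match the algorithm, and your homing argument (at round $e$ the agent is either at its initial synch node, protected by the terminal wait, or one edge away and returns in round $e+1$) is the paper's argument as well. However, what you label the ``main obstacle'' and defer --- the enumeration of compatible branch pairs when the agents wake up in different rounds --- is not a residual detail: it is where essentially all of the paper's proof lives. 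The paper splits into Case 1 (both agents wake in the same round) and Case 2 (different wake-up rounds), and Case 2 is resolved by a further case analysis on the topology ($u,v$ with or without common neighbors, leaf or not, $B_u$ or $B_v$ empty or not, $u,v$ adjacent or not) \emph{crossed with} the position of the already-oscillating agent (at its initial synch node or at its chosen neighbor) when the later agent wakes. In each such configuration one must check that the compulsory waits prevent the later agent from colliding into the oscillation, that the first simultaneous move is recognized by both agents as such, and that the terminal wait of the agent that happens to be home at the event round lets the other agent deduce its initial synch node. Your proposal names this task and gestures at ``tracking the parity of the oscillation phase,'' but does not carry it out, so as written it is an outline of the paper's proof rather than a proof.

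One concrete claim in your second paragraph is also false as stated: the two agents do \emph{not} stay within distance $2$ throughout the subroutine. In Case 1.1.1, when both agents simultaneously occupy their private neighbors $n_u\in A_u$ and $n_v\in A_v$, they can be at distance $3$. The reason each agent nevertheless detects the other's move is the paper's definition of ``seeing an agent moving'': it suffices that \emph{one endpoint} of the move (here, the other agent's initial synch node, which is always within distance $2$ of wherever the observer is during its oscillation) lies in the observer's input in two consecutive rounds, so the move is detected as a vacancy or occupancy of that visible node. Your conclusion (both agents detect the event round and hence learn the other is awake) survives, but it needs this endpoint-visibility argument, not a bound on the inter-agent distance.
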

 
 \begin{proof}
 	{\color{black}
 		We show that in all possible scenarios, agents agree on round {\it event round}+2 as the {\it red round}. Suppose that at least one of the agents sees the other agent for the first time in round $\hat t$. We prove the lemma by considering different scenarios separately.
 		
 		{\bf Case A:} {\it Both agents see each other moving in round $\hat t$.}\\ 
 		In this case, round $\hat t$ is the {\it event round}. Agents do not move in round $\hat t+1$ and decide on round $\hat t+2$ as the {\it red round}. Since they do not move in round $\hat t+1$, they remain at the same nodes that they occupy at the start of the subroutine. Hence, the lemma is true in this case.

 		{\bf Case B:} {\it At least one of the agents does not see the other agent moving in round $\hat t$.}\\
 		In this case, at least one of the agents wakes up in round $t\ge \hat t$. We have the following sub-cases in this case: }
 	\begin{itemize}
 		\item {\bf Case 1:} {\it Both agents wake up in the same round.} Suppose that both agents wake up in the same round $t$. The round $t+1$ is the initial wait round for both agents. Since none of them moves in round $t+1$, in this round both agents know the initial synch node of each other. This satisfies condition (iii) above. In round $t+2$, at least one of them moves.
 		\begin{itemize}
 			\item {\bf Case 1.1: Nodes $u$ and $v$ do not have any common neighbor.} We consider two sub-cases.
 			\begin{itemize}
 				\item {\bf Case 1.1.1:  None of $u$ and $v$ is a leaf node.} In this case, both agents are allowed to move simultaneously and they perform {\it oscillating movements} to their respective private neighbors. Since $A_u\cap A_v=\emptyset$, during their movements, agents do not collide with each other. In round $t+2$, for the first time, agents see each other moving. Thus, in this case, round $t+2$ is the {\it event round}. In round $t+2$, agent $a$ is at node $n_u\in A_u$ and agent $\hat a$ is at a node $n_v\in A_v$. Since none of the agents moved in round $t+1$, each of them knows that this is the first move of the other agent in this subroutine, and in round $t+2$ none of them is at its initial synch node. In round $t+3$, both of them move back to their respective initial synch nodes. In round $t+3$, each of the agents knows that the other agent is active and is back at its initial synch node. They both agree on round $t+4$ ({\it event round}+2) as the {\it red round}.

 				\item {\bf Case 1.1.2: One of $u$ and $v$ is a leaf node.} Without loss of generality, suppose that $v$ is a leaf node. Agent $a$ chooses a neighbor $n_u\in A_u$ and starts its  oscillating movement in round $t+2$. Since $v$ is a leaf, the only node where agent $\hat a$ can move, is node $u$. In round $t+1$, agent  $\hat a$ does not know whether agent $a$ is active or not. Thus, agent $\hat a$ does not move in round $t+2$ (a  compulsory wait). In round $t+2$, agent $a$ moves for the first time and it moves to node $n_u$. In round $t+2$, agent $\hat a$ knows that agent $a$ is active. In round $t+3$, agent $a$ moves back to node $u$ and agent $a$ does not move in this round to avoid collision with agent $a$ (a compulsory wait). In round $t+3$, agent $a$ is at $u$ and agent $\hat a$ knows that in the next round agent $a$ will be moving out of this node. In round $t+4$, agent $\hat a$ moves to node $u$ while agent $a$ moves to node $n_u$. This is first time agent $a$ sees agent $\hat a$ moving. Thus, $t+4$ is the event round. In round $t+5$, both agents move back to their respective initial synch nodes. In round $t+5$, each of  the agents knows that the other agent is active and is at its initial synch node. They agree on round $t+6$ ({\it event round}+2) as the red round.

 			\end{itemize}

 			\item {\bf Case 1.2:  Nodes $u$ and $v$  have at least one common neighbor.} There are two sub-cases.
 			\begin{itemize}
 				\item {\bf Case 1.2.1. None of $B_u$ and $B_v$ is empty.}
 				After the initial wait both agents move in this case. They do oscillating movements to nodes $n_u\in B_u$ and $n_v\in B_v$ respectively. Since $B_u\cap B_v=\emptyset$, the agents do not collide with each other during their movements. In round $t+2$, both agents move and see each other moving for the first time (the event round). In round $t+2$, each of the agents knows that the other agent is active and is currently not at its initial synch node. In round $t+3$, both the agents move back to their respective initial synch nodes. In round $t+3$, both agents know  that the other is active and that it is at its initial synch node. The agents agree on the round $t+4$ ({\it event round}+2) as the {\it red round}.

 				
 				
 				\item {\bf Case 1.2.2. One of $B_u$ and $B_v$ is empty.}
 				Without loss of generality, suppose $B_v=\emptyset$. In round $t+2$, agent $a$ moves to node $n_u\in B_u$. This is the first time agent $\hat a$ sees agent $a$ moving. Agent $\hat a$  does not moves in this round (a compulsory wait). In round $t+2$, agent $a$ is at $n_u$ and has not seen agent $\hat a$ moving. In round $t+3$, agent $a$ moves back to node $u$ and agent $\hat a$  moves to node $n_u$. This is the first time both agents move simultaneously. Thus, $t+3$ is the {\it event round}. Since agent $a$ has seen agent $\hat a$ moving for the first time and it is at its initial synch node in round $t+3$, gent $a$ does not move in round $t+4$ (the terminal wait for agent $a$). In round $t+4$, agent $\hat a$ moves back to node $v$. In round $t+4$, each of the agents knows that the other agent is active and is currently at its initial synch node. They agree on the round $t+5$ ({\it event round}+2) as the {\it red} round.    
 				
 			\end{itemize}
 			
 		\end{itemize}
 		\item {\bf Case 2:}  {\it Agents wake up in different rounds.} Without loss of generality, suppose that agent $\hat a$ wakes up later and wakes up in round $t$. It waits in round $t+1$. In round $t+1$, agent $a$ is either moving or waiting for agent $\hat a$ to move first. In this case, agent $a$ knows the  initial synch node of $\hat a$. However, there are cases in which agent $\hat a$ does not know yet the initial synch node of $a$. 
 		
 		\begin{itemize}
 			\item {\bf Case 2.1: Nodes $u$ and $v$ do not have any common neighbor.}
 			\begin{itemize}
 				\item {\bf Case 2.1.1: None of $u$ and $v$ is a leaf node.}  In this case, agent $\hat a$ sees agent $a$ moving between nodes $u$ and $n_u$, in round $t+1$. In round $t+1$, agent $a$ is either at $u$ or at $n_u$.  Agent $\hat a$ does not know the initial synch node of agent $a$. Node $n_u$ is not adjacent to node $v$. Agent $a$ knows the initial synch node of agent $\hat a$. 
 				
 				First suppose that agent $a$ is at node $u$ in round $t+1$. Agent $\hat a$ moves to node $u$ in round $t+2$ while agent $a$ moves to node $n_u$ in this round. This is the first time agents see each other moving, i.e., round $t+2$ is the event round. In round $t+2$ agent $a$ is at $n_u$ which is not the initial synch node of $a$. Thus, in round $t+3$, agent $a$ moves back to $u$ while agent $\hat a$ moves back to $v$. In round $t+3$, agents $a$ and $\hat a$ are  at their respective initial synch nodes and agent $\hat a$ knows  the initial synch node of $a$ in view of the move of agent $a$ in round $t+3$. (Agent $a$ moves even after seeing agent $\hat a$ moving in round $t+2$). They agree on round $t+4$ ({\it event round}+2) as the {\it red round}.

 				Next suppose that agent $a$ is at node $n_u$ in round $t+1$. In round $t+2$ agent $a$ moves to node $u$ and agent $\hat a$ does not move in this round (a compulsory wait to avoid collision). In round $t+2$, agent $a$ is at $u$ and has not seen agent $\hat a$ moving yet. In round $t+3$, agent $a$ moves to node $n_u$ and agent $\hat a$ moves to node $u$. This is the first time both agents see each other moving, i.e., round $t+3$ is the event round. In round $t+3$, both agents are not at their respective initial synch nodes. Thus, in round $t+4$, agents $a$ and $\hat a$ move back to nodes $u$ and $v$ respectively. In round $t+4$, each of the agents knows that the other agent is active and that it is currently at its initial synch node. They agree on round $t+5$ ({\it event round}+2) as the {\it red round}. 
 				
 				\item {\bf Case 2.1.2: One of $u$ and $v$ is a leaf node.} First consider the case when $u$ is a leaf node. In this case,  each of the agents is aware of the initial synch node of the other. Agent $a$ does not move in round $t+1$ (a compulsory wait). In round $t+2$, agent $\hat a$ moves to node $n_v\in A_v$. Nodes $u$ and $n_v$ are not adjacent. In round $t+2$, agent $\hat a$ is at node $n_v$ and has not seen agent $a$ moving yet. Thus, in round $t+3$, agent $\hat a$ moves back to node $v$. In round $t+4$, agent $\hat a$ moves to node $n_v$ and agent $a$ moves to node $v$. This is the first time when agents see each other moving. Thus, round $t+4$ is the event round. In round $t+4$ none of the agents is at its initial synch node. Thus, in round $t+5$ agents $\hat a$ and $a$ move back to nodes $v$ and $u$ respectively. In round $t+5$, each of the agents is at its initial synch node and knows that other agent is active. They agree on round $t+6$ ({\it event round}+2) as the {\it red round}.
 				
 				Now consider the case when $v$ is a leaf node. In this case, agent $\hat a$ finds $a$ moving in round $t+1$ between nodes $u$ and $n_u\in A_u$ and the only node where agent $\hat a$ can move is node $u$. In round $t+1$, agent $a$ is either at $u$ or at $n_u$. The rest of the argument is the same as in Case 2.1.1,
 			\end{itemize}
 			\item {\bf Case 2.2: Nodes $u$ and $v$ have common neighbors.}
 			
 			\begin{itemize}
 				\item   {\bf Case 2.2.1: None of $B_u$ and $B_v$ is empty.} In round $t+1$, agent $a$ moves, and in this round, it is either at $u$ or at $x\in B_u$. Agent $a$ knows the initial synch node of $\hat a$. However, agent $\hat a$ does not know the initial synch node of $a$.  Note that nodes $v$ and $x$ are adjacent. However, nodes $u$ and $v$ may not be adjacent. 
 				
 				First consider the case when nodes $u$ and $v$ are adjacent. If agent $a$ is at node $u$ in round $t+1$, in round $t+2$ agent $\hat a$ moves to node $u$ while agent $a$ moves to node $x$. This is the first time when agents see each other moving simultaneously in the same round i.e., round $t+2$ is the event round. In round $t+2$, agent $a$ has seen agent $\hat a$ moving for the first time, however agent $a$ is not at its initial synch node. Thus, in round $t+3$, agents $a$ and $\hat a$ move back to nodes $u$ and $v$ respectively. During these movements agents do not collide. In round $t+3$, agent $\hat a$ knows the initial synch node of $a$ (since agent $a$ has moved even after watching agent $\hat a$ moving in the previous round), each of the agents knows that the other agent is active and it is at its initial synch node. They agree on round $t+4$ ({\it event round}+2) as the {\it red round}.
 				
 				Now consider the case, when nodes $u$ and $v$ are not adjacent. If agent $a$ is at node $x$ in round $t+1$, agent $\hat a$ moves to node $x$ in round $t+2$ while agent $a$ moves to node $u$ at the same time. This is the first time when agents see each other moving, i.e., round $t+2$ is the event round. In round $t+2$, agent $a$ is at its initial synch node and it has just seen agent $\hat a$ moving for the first time. Thus, in round $t+3$, agent $a$ does not move (the terminal wait) while agent $\hat a$ moves back to node $v$. In round $t+3$, agent $\hat a$ knows the initial synch node of agent $a$ (since agent $a$ has not moved in round $t+3$ after watching agent $\hat a$ moving for the first time in the previous round). During these movements agents do not collide. Thus, in round $t+3$, both agents know that each of them is active and they are at their respective initial synch nodes. They agree on round $t+4$ ({\it event round}+2) as the {\it red round}.
 				
 				If agent $a$ is at node $u$ in round $t+1$, agent $a$ has not seen agent $\hat a$ moving yet. Thus, in round $t+2$, agent $a$ moves from node $u$ to node $x$. Agent $\hat a$ does not move in this round to avoid collision. In round $t+2$, agent $a$ has not seen agent $\hat a$ moving. Thus, in round $t+3$, agents $a$ moves to node $u$ while agent $\hat a$ moves to node $x$ at the same time. This is the first time when both agents see each other moving, i.e., round $t+3$ is the event round. In round $t+3$, agent $a$ is at node $u$ which is its initial synch node and it has seen agent $\hat a$ moving for the first time. Thus, in round $t+4$, agent $a$ does not move (the terminal wait for $a$) while agent $\hat a$ moves back to node $v$. In round $t+4$, each of  the agents knows that the other agent is active and it is at its initial synch node (since in round $t+4$ agent $a$ does not move after watching agent $\hat a$ moving for the first time in round $t+3$, agent $\hat a$ knows the initial synch node of agent $a$ in round $t+4$). They agree on round $t+5$ ({\it event round}+2) as the {\it red round}.
 				
 				\item   {\bf Case 2.2.2: One of $B_u$ and $B_v$ is empty.}  First consider the case when $B_u=\emptyset$. In this case, agent $a$ does not move in round $t+1$. Both agents know the initial synch node of each other. In round $t+2$, agent $\hat a$ moves from $v$ to node $y\in B_v$. Note that nodes $u$ and $y$ are adjacent. This is the first time agent $a$ sees agent $\hat a$ moving and agent $a$ is at its initial synch node. However, agent $a$ has not moved yet and thus round $t+3$ is not a terminal wait round for $a$.  In round $t+3$, agent $\hat a$ moves to node $v$ while agent $a$ moves to node $y$. This is the first time when agents see each other moving simultaneously i.e., round $t+3$ is the  event round. In round $t+3$, agent $\hat a$ is at its initial synch node and it has seen agent $a$ moving for the first time. Thus, in round $t+4$, agent $\hat a$ does not move (the terminal wait for $\hat a$) while agent $a$ moves back to its initial synch node $u$. In round $t+4$, each of the agents knows that other agent is active and currently is at its initial synch node. They agree on round $t+5$ ({\it event round}+2) as the {\it red round}.
 				
 				Now consider the case when $B_v=\emptyset$. In this case agent $\hat a$ sees agent $a$ moving in round $t+1$.  Agent $a$ knows the initial synch node of agent $\hat a$. However, agent $\hat a$ does not know yet the initial synch node of agent $a$. In round $t+1$, agent $a$ is either at $u$ or at a node $x\in B_u$. Nodes $v$ and $x$ are adjacent. There are two possibilities: either nodes $u$ and $v$ are adjacent or not. 
 				
 				First consider the case when nodes $u$ and $v$ are adjacent. If agent $a$ is at node $u$ in round $t+1$, in round $t+2$ agent $a$ moves  to node  $x$ while agent $\hat a$ moves to node $u$ in the same round. This is the first time when agents see each other moving. Thus, round $t+2$ is the event round. In round $t+2$, agent $a$ is at node $x$ and agent $\hat a$ is at node $u$. Thus, in round $t+3$ agents $a$ and $\hat a$ move back to nodes $u$ and $v$, respectively. In round $t+3$, each of the agents knows that the other agent is active and it is at its initial synch node. They decide on round $t+4$ ({\it event round}+2) as the {\it red round}. Next suppose that agent $a$ is at node $x$ in round $t+1$. Since it has not seen  agent $\hat a$ moving in round $t+1$ (agent $\hat a$ does its initial wait in round $t+1$), agent $a$ moves to node $u$ in round $t+2$. In this round, agent $\hat a$ moves to node $x$. This is the first time agents see each other moving, i.e., round $t+2$ is the event round. In round $t+2$, agent $a$ is at its initial synch node and has seen agent $\hat a$ moving for the first time. Thus in round $t+3$, agent $a$ does not move (the terminal wait for $a$) while agent $\hat a$ moves back to node $v$ in the same round. In round $t+3$, each of the agents knows that the other agent is active and currently is at its initial synch node. They agree on round $t+4$ ({\it event round}+2) as the {\it red round}. 
 				
 				Finally consider the case when nodes $u$ and $v$ are not adjacent. In round $t+1$, agent $a$ moves and agent $\hat a$  does not move due to its initial wait. If agent $a$ is at node $x$ in round $t+1$, then in round $t+2$, agent $a$ moves to node $u$ and agent $\hat a$ moves to node $x$. This is the first time when agents see each other moving. Thus, round $t+2$ is the event round. In round $t+2$, agent $a$ is at its initial synch node and has seen the other agent moving for the first time. Thus, in round $t+3$ agent $a$ does not move (the  terminal wait for $a$) while agent $\hat a$ moves back to node $v$. In round $t+3$, agent $\hat a$ is knows the initial synch node of $a$ and both the agents are at their respective initial synch nodes. They agree on round $t+4$ as the {\it red round}. Now, if agent $a$ is at node $u$ in round $t+1$ then, in round $t+2$, agent $a$ moves to node $x$ while agent $\hat a$ does not move in this round (compulsory wait to avoid collision).  In round $t+3$, agents $a$ and $\hat a$ move to nodes $u$ and $x$ respectively. This is the first time when agents see each other moving i.e., round $t+3$ is the event round. Since in round $t+3$, agent $a$ is at its initial synch node and it has seen agent $\hat a$ moving for the first time, agent $a$ does not move in round $t+4$ (the terminal wait for agent $a$). Agent $\hat a$ moves back to node $v$ in this round. In round $t+4$, each of the agents knows the initial synch node of the other agent and both of them are at their respective initial synch nodes. They agree on round $t+5$ ({\it event round}+2) as the {\it red round}.

 			\end{itemize}

 		\end{itemize}
 	\end{itemize}
 	
 \end{proof}

 We now give a detailed description of subroutine {\tt Progress}. The subroutine starts in the round following the red round. The two agents are at distance at most 2, at nodes $u$ and $v$, which are the nodes at which the current approach started. We consider the subgraph induced by the nodes $u$, $v$ and all their common neighbors. We call it the {\em approach graph} and we say that it is {\em symmetric} if there exists a port preserving automorphism of this graph interchanging $u$ and $v$. If this is not the case, it is possible to elect a leader among the agents, based only on the approach graph. It can be done as follows. Let $L(u)$ (resp. $L(v)$) be the lexicographically ordered list of all edges of the approach graph incident to $u$ (resp. $v$), coded as ordered pairs of port numbers with the port number at $u$ (resp. $v$) first. Since the approach graph is not symmetric, these lists are not identical. Consider the first index $i$ at which they differ. The leader is the agent for which the $i$th edge is lexicographically smaller than the $i$th edge for the other agent. There are four possible cases.
 
 Case 1. Nodes $u$ and $v$ do not have a common neighbor.
 
 Each of the agents executes the next step of their respective exploration walks. Since they do not have a common neighbor, this does not result in a collision.
 
 Case 2. Nodes $u$ and $v$ have a common neighbor, are not adjacent and the approach graph is not symmetric.
 
 If a leader was elected at some previous approach then the other agent is elected as leader in the current approach. Otherwise, agents elect a leader as specified above.
 The leader executes the next step of its exploration walk and the other agent stays idle for a round. Since nodes $u$ and $v$ are not adjacent, there is no collision.
 
 Case 3. Nodes $u$ and $v$ have a common neighbor, are adjacent and the approach graph is not symmetric.
 
 If a leader was elected at some previous approach then the other agent is elected as leader in the current approach. Otherwise, agents elect a leader as specified above.
 If the next step of the exploration walk of the leader is not along the edge joining $u$ and $v$ then the leader executes this next step and the other agent stays idle for a round. Otherwise the leader stays idle for a round. In the next round both agents execute the next step of their respective exploration walks. In both cases there is no collision.
 
 Case 4. Nodes $u$ and $v$ have a common neighbor and the approach graph is symmetric.
 
 This case is the most delicate: agents cannot simply execute the next step of their respective exploration walks because they risk colliding at a common neighbor and they cannot elect a leader  based only on the approach graph because this graph is symmetric. If a leader was elected at some previous approach then the other agent is elected as leader in the current approach.  Otherwise,  agents need to mutually communicate where each of them wants to go next. {\color{black}Let $d$ be the degree of nodes $u$ and $v$ in the original graph} and let $k=\lceil \log d \rceil +1$. Let $s_u$ (resp. $s_v$) be the binary string of length $k$ that is the binary representation of the port number (padded by zeroes in front, if necessary) that the agent at $u$ (resp. at $v$) should take to make the next step of its exploration walk. Each agent communicates this number by walking back and forth on the edge leading to a common neighbor and corresponding to the smallest port number. More precisely, let $x\in \{u,v\}$ and let $w_x$ be the common neighbor of $u$ and $v$ corresponding to this edge for $x$. Due to symmetry, we have $w_u\neq  w_v$. In the next $2k$ rounds $r_1,r'_1,\dots ,r_k,r'_k$, the agent originally located at $x$ behaves as follows. If the $i$th bit of $s_x$ is 1 then the agent goes from $x$ to $w_x$ in round $r_i$ and from $w_x$ to $x$ in round $r'_i$. If the $i$th bit of $s_x$ is 0 then the agent stays idle at $x$ in rounds $r_i$ and $r'_i$. Since the diameter of the approach graph is at most 2, each agent sees the moves of the other agent. After round $r'_k$, both agents learn $s_u$ and $s_v$ and behave as follows in the next round. If $s_u=s_v$ then both agents  execute the next step of their respective exploration walks. Since, by symmetry of the approach graph this is done either on the same edge or on non-incident edges, no collision occurs. If $s_u \neq s_v$ then symmetry can be broken: the leader is the agent for which $s_x$ is smaller. In this case agents behave as in Case 3 or as in Case 2, depending on whether $u$ and $v$ are adjacent or not.
 
 We are now ready to describe our main Algorithm {\tt Explo-without-Collisions}. Each agent executes its exploration walk interrupting it when an approach occurs. At every approach agents execute procedure {\tt Approach resolution} and then continue their respective exploration walks until the next approach. After performing $R(n)$ steps of its exploration walk, an agent stops at some node $v$, possibly only temporarily. At this time the agent has already visited all nodes of the graph. This will be called {\em provisional stop}. Let $w$ be the neighbor of $v$ corresponding to the port 0 at $v$ and let $p$ be the port number at $w$ corresponding to the neighbor $v$. If another approach occurs after the provisional stop of an agent (due to the fact that the other agent has not yet completed its exploration walk) the agent resumes Algorithm {\tt Explo-without-Collisions}, as if the next two port numbers in its exploration walk were 0 and $p$, i.e., it executes procedure {\tt Approach resolution} and then tries to make the next two steps from $v$ to $w$ and back. There is one exception to this rule. Let $\Sigma=R(n)(4\lceil \log n\rceil +8)$. Consider any agent and let $\sigma$ be the round of its provisional stop. 
 If no approach occurred between round $\sigma$ and  $\sigma+\Sigma$  then we define $\rho= \sigma+\Sigma$. Otherwise we define $\rho=\sigma +2\Sigma$.
 The agent knows that no approach will ever occur after round $\rho$ (see the proof of Theorem \ref{final}). It terminates in round $\rho$ and stays idle forever.
 
 \begin{theorem}\label{final}
 	Upon execution of Algorithm {\tt Explo-without-Collisions} both agents have visited all nodes of the graph, were never at the same node in the same round and remain idle forever. The running time of Algorithm {\tt Explo-without-Collisions} is $O(R(n) \log n)$.
 \end{theorem}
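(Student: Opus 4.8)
The plan is to prove the three correctness claims --- full visitation, collision-freeness, and eventual halting --- and then the time bound, building everything on the synchronization lemma. The key use of that lemma is the following invariant, valid at the red round of every approach: both agents sit at the nodes $u,v$ where the approach began, both are awake, and they agree that this is the red round. This lets me analyze each invocation of {\tt Progress} in isolation, with its initial configuration known exactly.

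The heart of the proof is a case analysis of {\tt Progress} establishing two local properties at once: (a) no node is occupied by both agents in any round of the subroutine, and (b) at least one agent advances its exploration walk by one step. Cases~1--3 are direct: with no common neighbor both agents may step safely; when a common neighbor exists but the approach graph is asymmetric, the lexicographic leader election yields a well-defined leader, and only the leader moves (after a possible one-round delay in the adjacent case), so no common neighbor is entered by both. Case~4, the symmetric one, is where I expect the main difficulty. Here I must check that the $2k$-round back-and-forth encoding of the strings $s_u,s_v$ is itself collision-free --- the two agents signal on the distinct edges to $w_u\neq w_v$, which differ precisely because the approach graph is symmetric --- that vision of radius~2 lets each agent decode both $s_u$ and $s_v$, and that the move chosen afterwards is safe: if $s_u=s_v$ the symmetry forces the two next steps onto the same edge or onto non-incident edges, while if $s_u\neq s_v$ symmetry breaks and we reduce to Case~2 or Case~3.

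Next I would prove the global fairness invariant that terminates the walks. By the construction of {\tt Progress} together with the leader-reuse rule --- the agent that led at the previous approach is forced to follow at the current one --- at every approach either both agents advance or they alternate, the non-advancing agent returning to the node it held when the approach began. Hence each agent advances at least once in every two consecutive approaches, so after finitely many approaches each agent has performed all $R(n)$ steps of its walk. By the property of $EXP(n)$ derived from Reingold's result \cite{Re}, $R(n)$ steps from any starting node visit the whole graph, which gives claim~(1) for both agents. Collision-freeness outside approaches is immediate, since an agent follows its walk only while it does not see the other agent; inside approaches it follows from the case analysis above. This establishes claim~(2) globally.

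For halting and the running time I would count. Each {\tt Progress} costs $O(\log n)$ rounds, dominated by the $2k=O(\log d)=O(\log n)$ signalling rounds of Case~4, and each approach advances the walks by at least one step; thus there are $O(R(n))$ approaches and the execution lasts $O(R(n)\log n)$ rounds. In particular $\Sigma=R(n)(4\lceil\log n\rceil+8)$ is an explicit upper bound on the number of rounds any agent still needs to complete its walk after an arbitrary round. The halting rule then argues that if an agent at its \emph{provisional stop} in round $\sigma$ sees no approach during $[\sigma,\sigma+\Sigma]$, the other agent must already have finished, so no future approach is possible and halting at $\rho=\sigma+\Sigma$ is safe; and if an approach does occur in that window, the extra $\Sigma$ rounds ($\rho=\sigma+2\Sigma$) absorb the time the other agent needs to finish afterwards, again precluding any approach past $\rho$. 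Verifying that this leaves both agents idle forever at distinct nodes completes claim~(3). The subtlest point throughout is that the two agents maintain independently their memory of ``who led last time,'' yet must stay consistent across approaches; I expect this bookkeeping, rather than any single geometric case, to be the real obstacle.
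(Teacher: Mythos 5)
Your overall architecture matches the paper's proof: collision-freeness and progress via a case analysis of {\tt Progress} (with the symmetric Case~4 handled by the $2k$-round signalling on the distinct edges to $w_u \neq w_v$), the alternation invariant giving each agent one walk-step at least every $4\lceil\log n\rceil+8$ rounds, and the counting that yields $O(R(n)\log n)$. However, there is a genuine gap in your justification of the halting rule, and it is exactly the point on which the paper's termination argument turns. You claim that if agent $A$, provisionally stopped in round $\sigma$, sees no approach during $[\sigma,\sigma+\Sigma]$, then ``the other agent must already have finished,'' and your only justification is that $\Sigma$ bounds the number of rounds any agent still needs to complete its walk. That bound holds only for an agent that is \emph{already awake}; wakeup rounds are chosen by an adversary and can be arbitrarily late. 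Your argument therefore does not exclude the scenario in which agent $B$ is still asleep at round $\sigma+\Sigma$: then $B$ wakes later, starts its walk, and must eventually come within distance $2$ of the terminated, forever-idle $A$ (its walk has to visit $A$'s node), producing an approach after $\rho$. At that approach $A$ ignores everything, so $B$ is stuck in {\tt Synchronization} waiting to see $A$ move (or, if it pressed on blindly, a collision could occur), and the theorem's conclusions fail. So ``no future approach is possible after $\rho$'' is precisely what needs proof, and the $\Sigma$ bound alone does not deliver it.

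The missing idea, which the paper supplies, is that $B$ must have woken up \emph{before} $A$'s provisional stop. Indeed, $A$'s completed walk visited every node, so at some point $A$ was within distance $2$ of $B$'s initial node; had $B$ still been asleep then, the resulting approach would have put $A$ into {\tt Synchronization}, where $A$ waits (or oscillates) until it sees $B$ move --- hence $A$ could never have completed its walk, contradicting the existence of $\sigma$. Only with this fact in hand does your $\Sigma$ bound apply to $B$: since $B$ woke before $\sigma$, its provisional stop occurs before roughly $\sigma+\Sigma$, so after round $\rho$ both agents are idle at distance at least $3$ and no approach can ever occur; the same fact is also what makes your second case ($\rho=\sigma+2\Sigma$) sound. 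Once you insert this wakeup argument, your proof closes the gap and coincides with the paper's.
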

 
 \begin{proof}
 	Subroutine {\tt Synchronization} is executed once, at the first approach, and takes constantly many  rounds. Each execution of subroutine {\tt Progress} takes at most $2\lceil \log n\rceil +4$ rounds, as the degree of any node is at most $n$. Hence each agent makes a step of its exploration walk at most every $4\lceil \log n\rceil +8$ rounds. It follows that $\Sigma$ rounds after the beginning of the first execution of subroutine {\tt Progress} both agents have visited all nodes and hence can terminate. On the other hand, if no approach happens by the round $\sigma_A+ \Sigma$, where $\sigma_A$ is the round of the provisional stop of agent $A$ then this agent knows that the other agent $B$ has already visited all nodes of the graph and provisionally stopped. This is because agent $B$ must have woken up before the provisional stop of $A$ (otherwise there would be an approach) and hence the provisional stop of $B$ must have occurred before round $\sigma_A +\Sigma$. To summarize, agent $A$ can safely terminate in round $\rho$ in all cases.
 	
 	The design of procedure {\tt Approach resolution} is such that agents are never at the same node in the same round. In rounds outside of approaches, agents are at distance at least 3, hence there can be no collision in the next round.
 	
 	It remains to estimate the running time of Algorithm {\tt Explo-without-Collisions}. Consider the wakeup round of the later agent as round 0 and consider any agent. Since its provisional stop occurs in round at most $\Sigma + O(1)$, the termination round is at most $3\Sigma + O(1)$. In view of the definition of $\Sigma$ this concludes the proof.
 \end{proof}

 \section{Discussion}
 
 In this section we discuss the two main assumptions of our model and show that none of them can be removed. The first assumption is that agents know an upper bound on the size of the graph, and the second is that agents have vision of radius 2. In order to show the necessity of the first assumption observe that, in the absence of any upper bound on the size of the graph, known to the agents, it is impossible to perform exploration with stop by any team of agents starting simultaneously in some ring in which all ports are numbered 0,1,0,1,... clockwise, regardless of the radius of vision. Indeed, consider a team of $k$ agents with vision of radius $r$, starting simultaneously at distances $r+1$ in rings of size larger than $k(r+1)$.
 For any algorithm that stops after $j$ rounds, no agent can see any other agent, and the subgraph seen by all agents in all rounds is the same. For any such algorithm, no agent will visit all nodes if  the size of the ring is larger than $j+1$.
 
 As for the second assumption, we show that it cannot be weakened. Indeed, suppose that agents have vision of radius 1, and consider 2 agents trying to explore any tree of diameter 2, i.e., any star. Port numbers at the central node $v$ are $0, 1, \dots , d-1$, and port numbers at all leaves are 0. Call the leaf corresponding to port $i$ at the central node, the $i$th leaf. If the agent starts at the $i$th leaf, it sees the two-node graph with port numbers 0 and $i$ at two endpoints of the single edge. Consider any hypothetical collision-free exploration algorithm, where two agents start at leaves 0 and 1. For any $i\in \{0,1\}$, this algorithm must  prescribe a move to the agent starting at the $i$th leaf,  in some round. Suppose that, for the $i$th leaf, it prescribes waiting $x_i\geq 0$ rounds and then moving, unless the agent sees another agent in one of the waiting rounds. Without loss of generality, let $x_0\geq x_1$. Then the adversary starts the agent at leaf 1 $x_0-x_1$ rounds after the agent at leaf 0. Thus both agents collide at the central node after their first move.
 
 \section{Conclusion}
 
 We designed a collision-free exploration algorithm working for two agents in arbitrary graphs of size larger than 2. The natural open problem is to generalize our collision-free exploration to arbitrary teams of agents. This is by no means straightforward. First, it is not even clear if the task is feasible whenever the number of agents is smaller than the number of nodes (as it is for two agents) and if so, what is the minimum vision radius necessary to accomplish it. Second, our exploration strategy for two agents relies on leader election among the agents, if this is possible, and on synchronization. For two agents these are global properties: an order is established on all agents, and all of them (i.e., both in our case) agree on a common round. It is not clear how to achieve a similar feature for many agents. Finally, 
 in our case, agents communicate indirectly by making moves. This is possible because,  in our case,  at least one agent is always adjacent to an empty node, and hence changes of positions can be detected by the other agent. This may not be the case for many agents, as there may be large parts of the graph completely filled with agents, and thus this indirect communication by observing moves may be difficult or impossible.
 
 Another issue concerns the efficiency of collision-free exploration, even for two agents. Since any agent can visit only one node in every round, it is impossible to beat the complexity of the best exploration by a single agent. As mentioned before, our solution has a generic feature: given any exploration algorithm by a single agent, it produces a collision-free exploration by two agents, with only logarithmic slowdown. While this slowdown is small, it is natural to ask if it can be removed: does there exist a collision-free exploration algorithm by two agents whose complexity is the same as that of the best exploration by a single agent?  Another variant of the problem is to require that every node be visited by only one agent, instead of both of them. Would this weakening help in getting the same complexity as the best exploration by a single agent? 
 
 \section{Acknowledgment}
 $\textbf{Funding:}$
 Supported in part by Natural Sciences and Engineering Research Council of Canada (NSERC) discovery grant 2018-03899 and by the Research Chair in Distributed Computing of the Universit\'{e} du Qu\'{e}bec en Outaouais, Canada.\\

 %
 

\end{document}